\documentclass{article} 
\usepackage{times}  
\usepackage{url}  
\usepackage{graphicx}  
\usepackage{fullpage}
\usepackage{amsmath,amsthm,amssymb}
\usepackage{caption}
\usepackage{subcaption}
\usepackage{color}

\newcommand{\R}{\mathbb{R}}

\newcommand{\X}{\mathcal{X}}

\renewcommand{\log}{\lg}

\newcommand{\M}{\mathcal{M}}
\newcommand{\tmin}{\textrm{min}}
\newcommand{\COST}{\mathrm{C}}
\newcommand{\OPT}{\mathrm{OPT}}
\DeclareMathOperator*{\argmin}{arg\,min}

\newcommand{\todo}[1]{\textcolor{red}{#1}}
\renewcommand{\todo}[1]{}
\newcommand{\concave}{concave }

\newtheorem{lemma}{Lemma}
\newtheorem{definition}{Definition} 
\newtheorem{Theorem}{Theorem}


\begin{document}

\title{Fast Exact k-Means, k-Medians and Bregman Divergence Clustering in 1D}

\author{
  Allan Gr\o nlund\thanks{Aarhus University.
    Email: \texttt{jallan@cs.au.dk}. Supported by MADALGO - Center for
    Massive Data Algorithmics, a Center of the Danish National
    Research Foundation.
  }    
  \and
  Kasper Green Larsen\thanks{Aarhus University.
    Email: \texttt{larsen@cs.au.dk}. Supported by MADALGO, a Villum
    Young Investigator Grant and an AUFF Starting Grant.
  }
  \and
  Alexander Mathiasen\thanks{Aarhus University.
    Email: \texttt{alexander.mathiasen@gmail.com}. Supported by
    MADALGO and an AUFF Starting Grant.
  }
  \and
  Jesper Sindahl Nielsen
  \thanks{Aarhus University.
    Email: \texttt{jasn@cs.au.dk}. Supported by MADALGO.
  }
  \and
  Stefan Schneider
  \thanks{University of California, San Diego. Email:
    \texttt{stschnei@cs.ucsd.edu}.  Supported by NSF grant CCF-1213151 from the
Division of Computing and Communication Foundations. Any opinions,
findings and conclusions or recommendations expressed in this material are those 
of the authors and do not necessarily reflect the views of the National Science Foundation.}
  \and
  Mingzhou Song
  \thanks{New Mexico State University. Email: \texttt{joemsong@cs.nmsu.edu}
}
}

\date{}
\maketitle
\begin{abstract} 
  The $k$-Means clustering problem on $n$ points is NP-Hard for any dimension $d\ge 2$, however, for the 1D case there exists exact polynomial time algorithms.
  Previous literature reported an $O(kn^2)$ time dynamic programming
  algorithm that uses $O(kn)$ space. It turns out that the problem has
  been considered under a different name more than twenty years ago.
  We present all the existing work that had been overlooked and
  compare the various solutions theoretically. Moreover,
  we show how to reduce the space usage for some of them, as well as
  generalize them to data structures that can quickly report an optimal $k$-Means clustering for any $k$.
  Finally we also generalize all the algorithms to work for the absolute distance and to work for any Bregman Divergence.
  We complement our theoretical contributions by experiments that
  compare the practical performance of the various algorithms.
\end{abstract}

\section{Introduction}
\label{sec:introduction}
Clustering is the problem of grouping elements into clusters such that each element is similar to the elements in the cluster assigned to it
and not similar to elements in any other cluster.  It is one of, if not the, primary problem in the area of machine learning known as
Unsupervised Learning and no clustering problem is as famous and widely considered as the $k$-Means problem:
Given a multiset $\X=\{x_1,...,x_n\}\subset\R^d$ find $k$ centroids $\M=\{\mu_1,...,\mu_k\}\subset \R^d$ minimizing $\sum_{x\in\X}\min_{\mu\in \M}||x-\mu||^2$.
Several NP-Hardness results exist for finding the optimal $k$-Means clustering in general, forcing one to turn towards heuristics.  
$k$-Means is NP-hard even for $k=2$ and general dimension
\cite{Aloise2009} and it is also NP-hard for $d=2$ and general $k$
\cite{Mahajan2009}.  Even hardness of approximation results exist
\cite{Lee201740,DBLP:conf/compgeom/AwasthiCKS15}.
In \cite{DBLP:conf/compgeom/AwasthiCKS15} the authors show there exists an $\varepsilon > 0$ such that it is NP-hard to approximate $k$-Means to within a factor $1+\varepsilon$ of optimal, and in \cite{Lee201740} they prove that $\varepsilon \ge 0.0013$.
On the upper bound side the best known polynomial time approximation algorithm for $k$-Means has an approximation factor of 6.357 \cite{DBLP:journals/corr/AhmadianNSW16}.
In practice, Lloyd's algorithm is a popular iterative local search heuristic that starts from some random or arbitrary clustering. The running time of Lloyd's algorithm is $O(tknd)$ where $t$ is the number of rounds of the local search procedure.
In theory, if Lloyd's algorithm is run to convergence to a local minimum, $t$ could be exponential and there is no guarantee on how well the solution found approximates the optimal solution \cite{Arthur:2006:SKM:1137856.1137880,Vattani2011}.
Lloyd's algorithm is often combined with the effective seeding technique for selecting initial centroids due to
\cite{arthur2007k} that gives an expected $O(\log k)$ approximation ratio for the initial clustering, which can then be further improved by Lloyd's algorithm.

For the one-dimensional case, the $k$-Means problem is not NP-hard. In particular, there is an $O(kn^2)$ time and $O(kn)$ space dynamic programming solution for the 1D case, due to work by \cite{wang2011ckmeans} giving focus to this simpler special case of $k$-Means.
The 1D $k$-Means problem is encountered surprisingly often in practice, some examples being in data analysis in social networks, bioinformatics and retail market \cite{arnaboldi2012analysis,jeske2013genome,pennacchioli2014retail}.
Note that an optimal clustering in 1D simply covers the input points with non-overlapping intervals.


It is natural to try other reasonable distance measures for the data considered and define different clustering problems instead of using the sum of squares of the Euclidian distance that defines $k$-Means.
For instance, one could use any $L_p$ norm instead. The special case of $p=1$ is known as $k$-Medians clustering and has also received considerable attention.
The $k$-Medians problems is also NP-hard in dimensions $2$ and up, and the best polynomial time approximation algorithm has an approximation factor of 2.633 \cite{DBLP:conf/compgeom/AwasthiCKS15}.

In \cite{Banerjee:jmlr} the authors consider and define clustering with Bregman Divergences.
Bregman Divergence generalizes squared Euclidian distance and thus Bregman Clusterings include the $k$-Means problem, as well as a wide range of other clustering problems that can be defined from Bregman Divergences like e.g. clustering with Kullback-Leibler divergence as the cost.
Interestingly, the heuristic local search algorithm for Bregman Clustering~\cite{Banerjee:jmlr} is basically the same approach as Lloyd's algorithm for $k$-Means.
Clustering with Bregman Divergences is clearly NP-Hard as well since it includes $k$-Means clustering.
We refer the reader to \cite{Banerjee:jmlr} for more about the general problem.
For the 1D version of the problem, \cite{Nielsen2014OptimalIC} generalized the algorithm from \cite{wang2011ckmeans} to the $k$-Median
problem and Bregman Divergences achieving the same $O(k n^2)$ time and $O(kn)$ space bounds.

\subsection{Even Earlier Work}
Unknown to both \cite{wang2011ckmeans, Nielsen2014OptimalIC}, it turns out that the $k$-Means problem has been studied before under a different name.
The paper~\cite{Wu_quant} from 1980 considers  a 1D data discrete quantization problem, where the problem is to quantize $n$ (sorted) weighted input points to $k$ representatives, and the goal is to compute the $k$ representatives that minimizes the quantization error.
The error measure for this quantization is weighted least squares, which is a simple generalization of the 1D $k$-Means cost.
Formally, given $n$ points $x_1,\dots, x_n$ with weights $w_1,\dots,w_n$, find centroids $\M=\{\mu_1,...,\mu_k\}\subset \R$ minimizing the cost
\begin{equation}
  \sum_{i=1}^n w_i \min_{\mu\in\M}(x_i-\mu)^2
  \label{eq:km_weighed}
\end{equation}

The algorithm given in \cite{Wu_quant} use $O(kn)$ time (and space). This was later improved to $O(k \sqrt{n \log n})$ time and space in \cite{Aggarwal1994} that also introduces a simple $O(n \lg U)$ time and linear space algorithm where $U$ is a \emph{universe} size that depends on the input.
Finally, the running time was reduced to $n 2^{O(\sqrt{ \lg \lg n \lg k})}$ by an algorithm of Schieber \cite{Schieber} in 1998 for the case $k=\Omega(\lg n)$. This algorithm uses linear space.
This is, to the best of our knowledge, currently the best bound known for 1D $k$-Means. 


\subsection{Our Contribution}

In this paper we present an overview of different 1D $k$-Means algorithms.
We show how to reduce the space for the presented dynamic programming algorithm and how this leads to a simple data structure that uses $O(n)$ space and can report an optimal $k$-Means clustering for any $k$ in $O(n)$ time.
We generalize the fast algorithms for $k$-Means to all Bregman Divergences and the Absolute Distance cost function.
Finally, we give a practical comparison of the algorithms and discuss the outcome.
The main purpose of this paper is clarifying the current state of algorithms for 1D $k$-Means both in theory and practice, introducing a regularized version of 1D $k$-Means, and generalizing the results to more distance measures.
All algorithms and proofs for $k$-Means generalize to handle the weighted version of the $k$-Means cost (Equation \ref{eq:km_weighed}) considered in  \cite{Wu_quant}.
For simplicity, we present the algorithms only for the standard unweighted definition of the $k$-Means cost.


We start by giving the dynamic programming formulation \cite{wang2011ckmeans, Wu_quant} that yields an $O(kn^2)$ time algorithm.
We then describe how \cite{Wu_quant} improves this to an algorithm  that runs in $O(n\log n + k n)$, or $O(k n)$ time if the input is already sorted.
Both algorithms compute the cost of the optimal clustering using $k'$ clusters for all $k' \leq k$. This is relevant for instance for model selection of the right $k$.
Second, we present the reduction from 1D $k$-Means to the shortest path graph problem considered in \cite{Schieber}, that yields a 1D $k$-Means clustering algorithm
that uses $n2^{O(\sqrt{\lg \lg n \lg k})}$ time for $k=\Omega(\lg n)$ and $O(n)$ space. 
In contrast to the $O(kn)$ time algorithm, this algorithm does not compute the optimal costs for using $k'$ clusters for all $k' \leq k$.
Finally, we present an algorithm using $O(n \lg U)$ time and linear space, where $\lg U$ is the number bits used to represent each input point.
%

For comparison, in 1D, Lloyd's algorithm takes $O(t k \lg n)$ time if the input is sorted, where $t$ is the number of rounds before convergence.
For reasonable values of $k$ the time is usually sublinear and extremely fast.
However Lloyd's algorithm does not necessarily compute the optimal clustering, it is only a heuristic. The algorithms considered in this paper compute the \emph{optimal} clustering quickly, so even for large $n$ and $k$ we do not need to settle for approximation or uncertainty about the quality of the clustering.

The $n2^{O(\sqrt{\lg \lg n \lg k})}$ and the $O(n\lg U)$ time algorithms for 1D $k$-Means are based on a natural regularized
version of $k$-Means clustering where instead of specifying the number of clusters beforehand, we instead specify a per cluster cost and then minimize the cost of the clustering plus the cost of the number of clusters used.
Formally, the problem is as follows: Given $\X=\{x_1,...,x_n\}\subset \R$ and a non-negative real number $\lambda \in \mathbb{R}^+$, compute the optimal regularized clustering:
\begin{equation*}
  \argmin_{k, \M=\{\mu_1,...,\mu_k\}} \sum_{x\in\X}\min_{\mu\in\M}(x-\mu)^2 + \lambda k
\end{equation*}
Somewhat surprisingly, it takes only $O(n)$ time to find the solution to the regularized $k$-Means if the input is sorted.



The $k$-Medians problem is to compute a clustering that minimizes the sum of absolute distances to the centroid, i.e. compute  $\M=\{\mu_1,...,\mu_k\}\subset \R$ that minimizes
\begin{align*}
	\sum_{x\in\X}\min_{\mu\in\M}\lvert x-\mu \rvert
\end{align*}
We show that the $k$-Means algorithms generalize naturally to solve this problem in the same time bounds as for the $k$-means problem.

Let $f$ be a differentiable real-valued strictly convex function.
The Bregman Divergence $D_f$ induced by $f$ is
\begin{align*}
  D_f(x,y) = f(x) - f(y) - \nabla_f(y)(x-y)
\end{align*}
Notice that the Bregman Divergence induced from $f(x) = x^2$, gives squared Euclidian Distance (k-Means).
Bregman divergences are not metrics since they are not symmetric in general and the triangle inequality is not necessarily satisfied.
They do have other redeeming qualities, for instance Bregman Divergences are convex in the first argument, albeit not the second, see \cite{Banerjee:jmlr,Boissonnat2010} for a more comprenhensive treatment.

The Bregman Clustering problem as defined in
\cite{Banerjee:jmlr} is to find $k$ centroids $\M = \{\mu_1,...,\mu_k\}$ that minimize
\begin{align*}
  \label{equ:bregman_clustering}
  \sum_{x \in \mathcal{X}} \min_{\mu \in \M} D_f(x,\mu)
\end{align*}
where $D_f$ is a Bregman Divergence.
For our case, where the inputs $x,y \in \mathbb{R}$, we assume that
computing a Bregman Divergence, i.e. evaluating $f$ and its derivative,
takes constant time.
We show that the $k$-Means algorithms naturally generalize to 1D clustering using any Bregman Divergence to define the cluster cost while still
maintaing the same running time as for $k$-Means.

\paragraph{Implementation.}
An independent implementation of the $O(n \log n+kn)$ time algorithm
is available in the R package Ckmeans.1d.dp~\cite{ckmeans}. This is the algorithm first presented in \cite{Wu_quant}. The
implementation is for $k$-Means clustering, and uses $O(kn)$ space.


\section{Algorithms for 1D $k$-Means - A Review} 
\label{sec:old_alg}
In the following we assume sorted input $x_1\le...\le x_n\in \R$. 
Notice that there could be many ways of partitioning the input and computing centroids that achieve the same cost.
This is for instance the case if the input is $n$ identical points.
The task at hand is to find any optimal solution for 1D $k$-Means.

Let $CC(i,j)=\sum_{\ell=i}^j(x_\ell-\mu_{i,j})^2$ be the cost of grouping
$x_i,...,x_j$ into one cluster with the optimal choice of centroid,
$\mu_{i,j} =\frac{1}{j-i+1}\sum_{\ell=i}^jx_\ell$, the mean
of the points.
\begin{lemma}
  It takes $O(n)$ time and space to construct a data structure that computes $CC(i,j)$ in constant time
  \label{lemma:precompute}
\end{lemma}
\begin{proof}
This is a standard application of prefix sums.
By definition,
\begin{align*}
  CC(i,j) & = \sum_{\ell=i}^{j} (x_\ell - \mu_{i,j})^2
  = \sum_{\ell=i}^{j} x_\ell^2 + \mu_{i,j}^2 - 2x_\ell \mu_{i,j}
   =  (j-i+1)\mu_{i,j}^2 + \mu_{i,j} \sum_{\ell=i}^{j} x_\ell + \sum_{\ell=i}^{j} x_\ell^2 .
\end{align*}
Using prefix sum arrays of $x_1,\dots,x_n$ and $x_1^2,\dots,x_n^2$ computing the centroid $\mu_{i,j}$ and 
the sums takes $O(1)$ time.
\end{proof}
\subsection{The $O(k n^2)$ Dynamic Programming Algorithm}
\label{subsec:sketch}
The algorithm finds the optimal clustering using $i$ clusters for all prefixes of the points $x_1,\dots,x_m$, for $m=1,\dots,n$, and $i=1,\dots, k$ with Dynamic Programming as follows.
Let $D[i][m]$ be the cost of optimally clustering $x_1,...,x_m$ into $i$ clusters. If $i=1$ the cost of optimally clustering $x_1,...,x_m$ is the cluster cost $CC(1,m)$. That is, $D[1][m]=CC(1,m)$ for all $m$.
By Lemma~\ref{lemma:precompute}, this takes $O(n)$ time.

For $i>1$
\begin{equation}
	D[i][m]=\min_{j=1}^m D[i-1][j-1] + CC(j, m)
	\label{equ:rec_cost}
\end{equation}
Notice that $D[i-1][j-1]$ is the cost of optimally clustering $x_1,...,x_{j-1}$ into $i-1$ clusters and $CC(j,m)$ is the cost of clustering $x_j,...,x_m$ into one cluster.
This makes $x_j$ the first point in the last and rightmost cluster.
Let $T[i][m]$ be the argument that minimizes \eqref{equ:rec_cost}
\begin{equation}
	T[i][m]:=\arg\min_{j=1}^m D[i-1][j-1]+ CC(j,m)
	\label{equ:rec_argmin}
\end{equation}
It is possible there exists multiple $j$ obtaining same minimal value for \eqref{equ:rec_cost}. To make the optimal clustering unique, such ties are broken in favour of smaller $j$.

Notice $x_{T[i][m]}$ is the first point in the rightmost cluster of the optimal clustering. Thus, given $T$ one can find the optimal solution by standard backtracking.
One can naively compute each entry of $D$ and $T$ using \eqref{equ:rec_cost} and \eqref{equ:rec_argmin}. This takes $O(n)$ time for each cell, thus $D$ and $T$ can be computed in $O(kn^2)$ time using $O(kn)$ space.
This is exactly what is described in \cite{wang2011ckmeans}. This dynamic programming recursion is also presented in \cite{Wu_quant}.



\subsection{An $O(kn)$ time algorithm}
\label{sec:fully_monotone}
The Dynamic Programming algorithm can be sped up significantly to $O(kn)$ time by reducing the time to compute each row of $D$ and $T$ to $O(n)$ time instead of $O(n^2)$ time.
This is exactly what \cite{Wu_quant} does, in particular it is shown how to reduce the problem of computing a row of $D$ and $T$ to searching an implicitly defined $n\times n$ matrix of a special form, which then allows computing each row of $D$ and $T$ in linear time.

Define $\COST_i[m][j]$ as the cost of the optimal clustering of $x_1,\dots,x_m$ using $i$ clusters, restricted to having the rightmost cluster (largest cluster center) contain the elements $x_j,\dots, x_m$.
For convenience, we define $\COST_i[m][j]$ for $j > m$ as the cost of clustering $x_1,\dots,x_m$ into $i-1$ clusters, i.e. the last cluster is empty.
This means that $\COST_i$ satisfies: 
\begin{align*}
  \COST_i[m][j] & = D[i-1][\min\{j-1,m\}] + CC(j,m)
\end{align*}
where by definition $CC(j,m)=0$ when $j > m$ (which is consistent with the definition in Section~\ref{sec:old_alg}). This menas that $D[i][m]$ relates to $\COST_i$ as follows:
\begin{align*}
  D[i][m] & =\min_j \COST_i[m][j]
\end{align*}
where ties are broken in favor of smaller $j$ (as defined in Section~\ref{subsec:sketch}).

This means that to compute a row of $D$ and $T$, we are computing $\min_j \COST_i[m][j]$ for all $m=1,\dots,n$.
Think of $\COST_i$ as an $n\times n$ matrix with rows indexed by $m$ and columns indexed by $j$.
With this interpretation, computing the $i$'th row of $D$ and $T$  corresponds to computing for each row $r$ in $\COST_i$, the column index $c$ that corresponds to the smallest value in row $r$.
In particular, the entries $D[i][m]$ and $T[i][m]$ correpond to the value and the index of the minimum entry in the $m$'th row of $\COST_i$ respectively.
The problem of finding the minimum value in every row of a matrix has been studied before \cite{Aggarwal1987}.
First we need the definition of a monotone matrix.
\begin{definition}
  \cite{Aggarwal1987}
  Let $A$ be a matrix with real entries and let $\argmin(i)$ be the index of the leftmost column
  containing the minimum value in row $i$ of $A$. $A$ is said to be \emph{monotone} if $a < b$ implies that $\argmin(a) \leq \argmin(b)$.
  $A$ is \emph{totally monotone} if all of its submatrices are monotone.
\end{definition}
In \cite{Aggarwal1987}, the authors showed:
\begin{Theorem}
  \cite{Aggarwal1987} Finding $\argmin(i)$ for each row $i$ of an arbitrary $n \times m$ monotone matrix requires $\Theta(m \log
  n)$ time, whereas if the matrix is totally monotone, the time is
  $O(m)$ when $m> n$ and is $O(m(1+\log(n/m)))$ when $m < n$.  
\end{Theorem}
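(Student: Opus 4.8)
The plan is to obtain all four bounds from a single divide‑and‑conquer skeleton, which in the totally monotone case is sped up by a column‑elimination step. Throughout I use that in a monotone matrix the leftmost‑minimum positions satisfy $\argmin(1)\le\argmin(2)\le\cdots\le\argmin(n)$, so knowing $\argmin(i)$ confines the minima of all rows above $i$ to columns $\le\argmin(i)$ and those below to columns $\ge\argmin(i)$.

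\emph{Upper bound $O(m\log n)$ for monotone matrices.} First I would use the recursion $\textsc{Solve}([i_1,i_2],[j_1,j_2])$ which, given that rows $i_1,\dots,i_2$ have their minima in columns $[j_1,j_2]$, picks the middle row $i=\lfloor(i_1+i_2)/2\rfloor$, scans its entries in columns $[j_1,j_2]$ to find $j=\argmin(i)$, and recurses on $([i_1,i-1],[j_1,j])$ and $([i+1,i_2],[j,j_2])$; if $j_1=j_2$ the whole block is answered in $O(1)$ and we stop. Correctness is immediate from monotonicity. For the running time, the recursion has depth $O(\log n)$ because the row range halves, and at any fixed depth the column ranges of the live subproblems cover $[1,m]$ and overlap only at endpoints, so the number of entries read at that depth is $O(m)$; summing over depths gives $O(m\log n)$.

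\emph{Upper bound for totally monotone matrices.} Here I would add the elimination lemma underlying the SMAWK algorithm: from an $r\times c$ totally monotone matrix with $c>r$ one can delete $c-r$ columns in $O(c)$ time while keeping, for every row, a surviving column that attains its minimum. I would prove it by the standard left‑to‑right stack sweep that keeps a stack $s_1<\cdots<s_t$ of surviving columns with the invariant that $s_p$ beats every earlier surviving column in row $p$: on reaching a new column $j$, while it beats $s_t$ in row $t$ pop $s_t$, then push $j$ if $t<r$ and otherwise discard $j$; it is \emph{total} monotonicity that makes a popped or discarded column a minimum column in no row. The master routine on an $n\times m$ totally monotone matrix then (i) eliminates columns down to $\min(m,n)$, (ii) recurses on the even‑indexed rows, and (iii) fills in each odd row $2\ell+1$ by scanning only the interval $[\argmin(2\ell),\argmin(2\ell+2)]$, which contains $\argmin(2\ell+1)$ by monotonicity. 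Since these interpolation intervals telescope, the work outside recursion is $O(m+n)$ while the number of rows halves, so the recurrence solves to $O(m)$ when $m\ge n$. When $m<n$, elimination does nothing initially, so I would instead exploit that the row minima form a staircase with at most $m$ distinct values and locate its $\le m$ breakpoints by a balanced binary search over the $n$ rows, which yields $O(m(1+\log(n/m)))$.

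\emph{Lower bound $\Omega(m\log n)$ for monotone matrices.} Since the first part already gives a matching upper bound, what remains is to rule out anything faster on general monotone matrices, which I would do with an adversary that keeps alive the family of monotone matrices consistent with the answers given so far, realised by nondecreasing row‑to‑column assignments together with entry values chosen so that pinning down each of the $\Theta(m)$ staircase steps genuinely costs $\Omega(\log n)$ probes and these costs do not interfere; answering each queried entry so as to keep the surviving family largest then forces $\Omega(m\log n)$ probes before the set of row minima is determined. I expect the real work to be twofold: checking the elimination lemma (that it is total, not plain, monotonicity that makes a dominated column simultaneously safe to delete in every row), and designing the adversary family so that it stays monotone under all of the adversary's answers while still forcing $\Omega(m\log n)$ probes; the divide‑and‑conquer bookkeeping and the solving of the recurrences are routine by comparison.
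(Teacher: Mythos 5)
First, a point of reference: the paper does not prove this statement at all --- it is imported verbatim from Aggarwal et al.\ \cite{Aggarwal1987} as a black box, so there is no in-paper proof to compare against. Judged on its own terms, your reconstruction of the upper bounds is essentially the standard one and is sound in outline: the row-halving divide and conquer with telescoping column intervals gives the monotone bound (modulo an additive $O(n)$ term for touching every row, which is also implicit in the original), and the REDUCE-style stack elimination plus recursion on even rows plus interpolation is exactly SMAWK; you correctly identify that \emph{total} monotonicity is what licenses discarding a popped column in every row, not just the row at which it was beaten. The weakest of the upper-bound pieces is the $m<n$ case: ``locate the $\le m$ staircase breakpoints by balanced binary search'' is a plan, not an argument --- you have not said what a single probe in that search costs or why the costs sum to $O(m(1+\log(n/m)))$ rather than $O(m\log n)$. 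The standard route is to subsample every $\lceil n/m\rceil$-th row, run SMAWK on the resulting $m\times m$ matrix in $O(m)$ time, and then resolve each block of $\lceil n/m\rceil$ intermediate rows inside its confined column interval with the monotone divide and conquer, where the telescoping of interval widths gives the stated bound.

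The genuine gap is the lower bound. The theorem asserts $\Theta(m\log n)$, and your treatment of the $\Omega(m\log n)$ direction consists entirely of a declaration that an adversary argument should exist, followed by an explicit admission that designing the adversary family is ``the real work.'' Nothing in what you wrote constrains the adversary's answers, exhibits the family of consistent monotone matrices, or explains why resolving one staircase step cannot leak information about the others --- which is precisely the delicate part, since a single probed entry participates in many candidate staircases and naive per-step accounting double-counts. As written, the lower bound is unproven, so the statement as a whole is not established. If you want to complete this, either carry out the adversary construction in full or reduce from a problem with a known $\Omega(m\log n)$ comparison lower bound (e.g., embedding $m$ independent binary searches over $n$ rows into a monotone matrix in a way that any row-minima algorithm must solve all of them).
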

The fast algorithm for totally monotone matrices is known as the \emph{SMAWK} algorithm.

Let us relate this monotonicity to the 1D $k$-Means clustering problem.
That $\COST_i$ is monotone means that if we consider the optimal clustering of the points $x_1,\dots,x_a$ with $i$ clusters, and we add more points $x_{a+1} \leq \dots \leq x_{b}$ after $x_a$,
then the first (smallest) point in the last (rightmost) of the $i$ clusters can only increase (move right) in the new optimal clustering of $x_1,\dots,x_{b}$.
This sounds like it should be true for 1D $k$-Means and in fact it is.
Thus, applying the algorithm for monotone matrices, one can fill a row of $D$ and $T$ in $O(n\log n)$ time leading to an $O(k n\log n)$ time algorithm for 1D $k$-Means, which is already a great improvement.

However, as shown in \cite{Wu_quant} the matrix $\COST_i$ defined above is in fact totally monotone.
This follows from the following fact about the 1D $k$-Means clustering cost  \todo{redo indicies here}
\begin{Theorem}[Monge Concave for 1D k-Means cost \cite{Wu_quant}]

  For any $a < b$ and $u < v$, it holds that:
  $$
  CC(v, b) + CC(u, a)  \leq  CC(u,b) + CC(v, a).
  $$
\label{thm:wu_monge}
\end{Theorem}
This is the property known as the \textit{concave monge} (concave for short) property \cite{Yao_F,Hirschberg_Larmore,Wilber}.
We prove this for the general case of Bregman Divergences in Section \ref{sec:extensions}.
For completeness we include a proof that Total Monotonicity is implied by concave cost funtion below.
\begin{lemma}
  \label{lem:monotone}
  The matrix $\COST_i$ is totally monotone if the cluster cost $CC(i, j)$ is monge concave
\end{lemma}
\begin{proof}
  As~\cite{Aggarwal1987} remarks, a matrix $A$ is totally monotone if all its $2 \times 2$ submatrices are monotone.
  To prove that $\COST_i$ is totally monotone, we need to prove that for any two row indices $a,b$ with $a < b$ and two column indices $u,v$ with $u < v$, it holds that if $\COST_i[a][v] < \COST_i[a][u]$ then $\COST_i[b][v] < \COST_i[b][u]$.

  Notice that these values correspond to the costs of clustering elements $x_1,\dots, x_{a}$ and $x_1,\dots,x_{b}$, starting the rightmost cluster with element $x_v$ and $x_u$ respectively.
  Since $\COST_i[m][j] = D[i-1][\min\{j-1,m\}] + CC(j,m)$, this is the same as proving that
    \begin{align*}
      & D[i-1][\min\{v-1,m\}] + CC(v,a) <  D[i-1][\min\{u-1,m\}] + CC(u,a)  \Rightarrow\\
      & D[i-1][\min\{v-1,m\}] + CC(v,b) <  D[i-1][\min\{u-1,m\}] + CC(u,b)
    \end{align*}
    which is true if we can prove that $CC(v, b) - CC(v, a) \leq  CC(u,b) - CC(u,a)$. 
    Rearranging terms, what we need to prove is that for any $a < b$ and $u < v$, it holds that:
    %
    $$
    CC(v, b) + CC(u, a)  \leq  CC(u,b) + CC(v, a).
    $$
    %
    which is the monge concave property.
\end{proof}

As explained in \cite{Wu_quant} the total monotonicity property of the cost matrices and the SMAWK algorithm directly yields an $O(kn)$ time (and space) algorithm for 1D $k$-means.
\begin{Theorem}
  \cite{Wu_quant}
  Computing an optimal $k$-Means clustering of a sorted input of size $n$ for takes $O(k n)$ time.
\end{Theorem}
By construction the cost of the optimal clustering is computed for all $k'\leq k$.
By storing the $T$ table with cluster centers then for any $k'\leq k$ the cluster indices of an optimal clustering can be extracted in $O(k')$ time.

The concave propery has been used to significantly speed up algorithms, in particular (Dynamic Programming) algorithms, for several other 1D problems. For the interested reader we refer to \cite{Yao_F,Hirschberg_Larmore,Wilber}. 

\subsection{A $n 2^{O(\sqrt{\lg \lg n \lg k})}$ time algorithm for $k=\Omega(\lg n)$}
\label{sec:fastest_algo}
The concave property of the $k$-Means cost yields and algorithm for computing the optimal $k$-Means clustering for a given $k=\Omega(\lg n)$ in $n 2^{O(\sqrt{\lg \lg n \lg k})}$ time.
The result follows almost directly from Schieber \cite{Schieber} who gives an algorithm with the aforementioned running time for finding the shortest path of fixed length $k$ in a directed acyclic graph with $n$ nodes and weights, $w(i,j)$, that satisfy the concave property. It is assumed that the weights are represented by a function (oracle) that returns the weight of a requested edge in constant time.
\begin{Theorem}[\cite{Schieber}]
  Computing a minimum weight path of length $k$ between any two nodes in a directed acyclic graph of size $n$ where the weights satisfy the concave property takes  $n 2^{O(\sqrt{\lg \lg n \lg k})}$ time using $O(n)$ space.
\end{Theorem}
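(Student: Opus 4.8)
The statement is the main theorem of~\cite{Schieber}, so one may simply cite it; but let me outline the approach, both because it exhibits the only two properties of $k$-Means that are used (Lemma~\ref{lem:monotone} and Lemma~\ref{lemma:precompute}) and because the linear-space bound deserves a word. The plan is Lagrangian relaxation of the length constraint. For a toll $\lambda\ge 0$, replace each edge weight $w(i,j)$ by $w_\lambda(i,j)=w(i,j)+\lambda$; adding a constant to every edge preserves inequality~\eqref{eq:monge}, so $w_\lambda$ still has the concave property. A minimum-weight path from $1$ to $n$ under $w_\lambda$ with no restriction on its number of edges solves $P[m]=\min_{j<m}P[j]+w_\lambda(j,m)$, and the matrix $(P[j]+w_\lambda(j,m))_{m,j}$ is totally monotone for exactly the reason proved in Lemma~\ref{lem:monotone}; hence an online (incremental) version of the SMAWK machinery of Section~\ref{sec:fully_monotone} evaluates it in $O(n)$ time and $O(n)$ space. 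As $\lambda$ grows the number of links of the optimal path is non-increasing and the optimal value is concave and piecewise linear in $\lambda$, so there is a breakpoint $\lambda^*$ at which optimal paths with $>k$ and with $<k$ links both exist; at such a $\lambda^*$ a path with exactly $k$ links is also optimal (two optimal paths of different length can be interpolated link-by-link using the concave structure), and the sought minimum $k$-link weight is then the optimal $w_{\lambda^*}$-value minus $\lambda^*k$.

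The hard part, and the reason for the exotic running time, is to locate $\lambda^*$ without paying for the magnitude or precision of the weights: naive binary search on $\lambda$ depends on the numbers and costs $\Theta(n\log W)$, which does not beat the $O(nk)$ dynamic program in the regime of interest. Schieber's device trades the search over the real parameter for a recursion on $k$: cut a $k$-link path into roughly $\sqrt{k}$ blocks of roughly $\sqrt{k}$ links, form a compressed graph on the same $n$ vertices whose edge $(i,j)$ carries the optimal cost of a $\sqrt{k}$-link path from $i$ to $j$, and recurse to solve a $\sqrt{k}$-link problem on it. Two facts drive this: the compressed weights again satisfy the concave property (the min-plus composition of concave weights, restricted to a fixed number of links, is concave, up to a careful tie-break), and they can be answered by an oracle that returns each queried edge in $O(1)$ amortized time from a linear-size table produced by a single SMAWK sweep, so the compressed graph is never materialized. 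Balancing the recursion depth against the per-level search overhead in Schieber's scheme yields $n\,2^{O(\sqrt{\lg\lg n\,\lg k})}$ time, and the space stays $O(n)$ because every level reuses the same $O(n)$ scratch arrays and, as in Section~\ref{subsec:lowspace}, the path itself is recovered by divide-and-conquer rather than by storing predecessor tables.

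I expect three places to need genuine care. First, the closure lemma that the compressed (exactly-$\ell$-link min-plus) weights inherit the concave property, with a tie-breaking rule that does not destroy it. Second, proving that the magnitude-independent search halts at a breakpoint and that a true $k$-link optimum is always extractable there --- the link-count interpolation argument --- which I regard as the real heart of the proof. Third, the amortized-$O(1)$ implicit edge oracle together with the predecessor-free $O(n)$-space path recovery across recursion levels. Given the theorem, the application to 1D $k$-Means is immediate: take vertices $0,1,\dots,n$ and edge weight $w(i,j):=CC(i+1,j)$, a valid $O(1)$-time concave weight oracle by Lemma~\ref{lemma:precompute} and inequality~\eqref{eq:monge} of Lemma~\ref{lem:monotone}; a minimum-weight $k$-link path from $0$ to $n$ is exactly an optimal $k$-clustering, computed in $n\,2^{O(\sqrt{\lg\lg n\,\lg k})}$ time and $O(n)$ space.
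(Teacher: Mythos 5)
Your proposal is correct and matches the paper's treatment: the paper does not prove this theorem at all, it imports it verbatim from Schieber's work, which is exactly what you say one may do. Your additional sketch of Schieber's Lagrangian-relaxation-plus-recursion scheme is a faithful high-level account of that external proof (and you rightly flag the breakpoint-interpolation and compressed-weight-closure steps as the delicate parts), but since the paper offers no internal argument to compare against, there is nothing further to reconcile.
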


The 1D $k$-Means problem is reducible to this directed graph problem as follows. Sort the input in $O(n\log n)$ time and let $x_1 \leq x_2 \leq \dots x_n$ denote the sorted input sequence.
For each input $x_i$  associate a node $v_i$ and add an extra node $v_{n+1}$. Define the weight of the edge from $v_i$ to $v_j$ as the cost of clustering $x_i,\dots,x_{j-1}$ in one cluster, which is $CC(i,j-1)$.
Each edge weight is computed in constant time (by Lemma~\ref{lemma:precompute}) and  the edge weights satisfy the monge concave property by construction.
Finally, to compute the optimal clustering use Schieber's algorithm to compute the lowest weight path with $k$ edges from $v_1$ to $v_{n+1}$.
\begin{Theorem}
  Computing an optimal $k$-Means clustering of an input of size $n$ for given $k = \Omega(\lg n)$ takes $n 2^{O(\sqrt{\lg \lg n \lg k})}$ time using $O(n)$ space.
  \label{thm:fast_clustering}
\end{Theorem}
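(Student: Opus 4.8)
The plan is to reduce the problem to computing a minimum-weight path with exactly $k$ edges in a weighted DAG and then invoke the theorem of Schieber stated above. First I would sort the input in $O(n\log n)$ time, obtaining $x_1 \le \cdots \le x_n$, and build the prefix-sum data structure of Lemma~\ref{lemma:precompute} in $O(n)$ time and $O(n)$ space, so that any cluster cost $CC(i,j)$ can be evaluated in $O(1)$ time. I would then consider the DAG on vertices $v_1,\dots,v_{n+1}$ with a directed edge from $v_i$ to $v_j$ for every $i<j$, of weight $w(i,j) := CC(i,j-1)$; each such weight is returned in constant time by the preprocessed structure, so the graph is represented implicitly in exactly the form Schieber's algorithm expects.

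Next I would establish the two facts that make the reduction correct. \emph{Correspondence:} in 1D an optimal $k$-Means clustering partitions the sorted points into $k$ consecutive intervals (each point is assigned to its nearest center, which induces an interval partition), so any optimal solution corresponds to split indices $1 = i_1 < i_2 < \cdots < i_k < i_{k+1} = n+1$ of total cost $\sum_{t=1}^{k} CC(i_t,\, i_{t+1}-1)$, which is precisely the weight of the path $v_{i_1}\to v_{i_2}\to\cdots\to v_{i_{k+1}}$; conversely, every $k$-edge path from $v_1$ to $v_{n+1}$ encodes a valid $k$-clustering of the same cost. Hence the minimum weight of a $k$-edge path from $v_1$ to $v_{n+1}$ equals the optimal $k$-Means cost, and the path itself yields the clusters (and one could assume $k \le n$, as otherwise the cost is trivially $0$). \emph{Concavity of the weights:} I would check that $w$ satisfies the concave property required by Schieber, namely $w(i,j) + w(i',j') \le w(i,j') + w(i',j)$ for $i<i'$ and $j<j'$, by setting $u := i$, $v := i'$, $a := j-1$, $b := j'-1$ and invoking Equation~\eqref{eq:monge} from the proof of Lemma~\ref{lem:monotone}, which states exactly $CC(v,b) + CC(u,a) \le CC(u,b) + CC(v,a)$ for $u<v$ and $a<b$.

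With both facts in place, Schieber's theorem computes the desired minimum-weight $k$-edge path, and therefore the optimal clustering, in $n\,2^{O(\sqrt{\lg\lg n\,\lg k})}$ time using $O(n)$ space (the constant-time weight oracle and the $O(n)$-space preprocessing fit within these bounds). Finally I would verify that the initial sorting step does not dominate: since $\log n = 2^{\lg\lg n}$ and the hypothesis $k = \Omega(\lg n)$ gives $\lg k = \Omega(\lg\lg n)$, we have $\sqrt{\lg\lg n\,\lg k} = \Omega(\lg\lg n)$, so $n\log n = n\,2^{\lg\lg n} = n\,2^{O(\sqrt{\lg\lg n\,\lg k})}$ and the $O(n\log n)$ sorting cost is absorbed into the stated bound.

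As for difficulty, there is no serious obstacle once the earlier results are available: the theorem is essentially a packaging of Lemma~\ref{lem:monotone} (for concavity), Lemma~\ref{lemma:precompute} (for the constant-time weight oracle), and Schieber's algorithm. The only points needing care are getting the index shift in $w(i,j)=CC(i,j-1)$ right so that Equation~\eqref{eq:monge} transfers verbatim to the concave inequality for $w$, recording the standard fact that 1D optimal clusters are contiguous, and noting that the condition $k = \Omega(\lg n)$ is precisely what is needed to swallow the sorting term.
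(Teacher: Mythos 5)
Your proposal is correct and follows exactly the paper's route: sort, build the prefix-sum oracle, reduce to a minimum-weight $k$-edge path in the implicit DAG on $v_1,\dots,v_{n+1}$ with $w(i,j)=CC(i,j-1)$, verify concavity via Equation~\eqref{eq:monge}, and invoke Schieber's theorem. The extra details you supply (contiguity of optimal 1D clusters, the index shift, and the check that $k=\Omega(\lg n)$ absorbs the $O(n\log n)$ sorting cost) are all correct and only make explicit what the paper leaves implicit.
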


It is relevant to briefly consider parts of Schieber's algorithm and how it relates to $k$-Means clustering, in particular a \textit{regularized} version of the problem.
Schieber's algorithm relies crucially on algorithms that given a directed acyclic graph where the weights satisfy the concave property computes a minimum weight path in $O(n)$ time \cite{Wilber,Klawe}.
Note the only difference in this problem compared to above, is that the search is not restricted to paths of $k$ edges only.
As noted in \cite{Aggarwal1994}, if the weights are integers then the algorithm solves the Monge Concave Directed Graph Problem in $n \lg U$ time where $U$ is the largest absolute value of a weight.
In the reduction from 1D $k$-Means the weights are not integers and we must take care.


\subsection{Regularized 1D $k$-Means and an $O(n \lg U)$ time algorithm for $k$-Means}
\label{sec:reg}
Consider a regularized version of the $k$-Means clustering problem where instead of providing the number of clusters $k$ we additionally specify a penalty per cluster and ask to minimize the cost of the clustering plus the penalty $\lambda$ for each cluster used.
Formally, the problem is as follows: Given  $\X=\{x_1,...,x_n\}\subset \R$ and $\lambda$, compute the optimal regularized clustering:
\begin{equation*}
  \argmin_{k, \M=\{\mu_1,...,\mu_k\}} \sum_{x\in\X}\min_{\mu\in\M}(x-\mu)^2 + \lambda k
\end{equation*}

If we set $\lambda=0$ the optimal clustering has cost zero and use a cluster for each input point.
If we let $\lambda$ increase towards infinity, the optimal number of clusters used in the optimal solution monotonically decreases towards one (zero clusters is not well defined).
Let $d_\tmin$ be the smallest distance between input points and for simplicity assume all the input points are distinct. The optimal cost of using $n-1$ clusters is then  $d_\tmin^2/2$.
When $\lambda > \lambda_{n-1} = d_\tmin^2/2 $ it is less costly to use only $n-1$ clusters compared to using $n$ clusters since the benefit of using one more cluster is smaller than the cost of a cluster.
Letting $\lambda$ increase again inevitably leads to a miminum value $\lambda_{n-2} > \lambda_{n-1}$ such that for $\lambda > \lambda_{n-2}$ using only $n-2$ clusters gives a better optimal cost than using $n-1$ clusters.
Following the same pattern $\lambda_{n-2}$ is the difference between the optimal cost using $n-2$ clusters and $n-1$ clusters.

Continuing this way yields the very interesting sequence $0 < \lambda_{n-1} \leq \dots \lambda_2 \leq \lambda_1$ that encodes the only relevant choices for the regularization parameter $\lambda$,
where $\lambda_i =^{\mathrm{def}} \OPT_{i} - \OPT_{i+1}$ and $\OPT_i$ is the cost of an optimal $k$-Means clustering with $i$ clusters.
Note that the $O(nk)$ algorithm actually yields $\lambda_1, \dots, \lambda_{k-1}$ since it computes the optimal cost for all $k'\leq k$.
It is an interesting open problem if one can compute this sequence in $n \lg^{O(1)} n$ time, since this encodes all the relevant information about the input instance using linear space, and from that the 1D $k$-Means clustering can be reported in $O(n)$ time for any $k$.

In the reduction to the directed graph problem, adding a cost of $\lambda$ for each cluster used corresponds to adding $\lambda$ to the weight of each edge.
Note that the edge weights still satisfy the concave property.
Thus, solving the regularized version of $k$-Means clustering correponds to finding the shortest path (of any length) in a directed acyclic graph where the weights satisfy the concave property.
By the algorithms in \cite{Wilber,Klawe} this takes $O(n)$ time.
\begin{Theorem}
  Computing an optimal regularized 1D $k$-Means clustering of a sorted input of size $n$ takes $O(n)$ time.
  \label{thm:fast_reg_clustering}
\end{Theorem}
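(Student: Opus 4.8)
The plan is to reduce the regularized problem to a single shortest‑path computation in the very same directed acyclic graph used in Section~\ref{sec:fastest_algo}, except that we drop the constraint on the number of edges and instead fold the penalty into the edge weights. Concretely, I would take the sorted input $x_1 \le \dots \le x_n$, create nodes $v_1,\dots,v_{n+1}$, and define the weight of the edge from $v_i$ to $v_j$ (for $i < j$) to be $CC(i,j-1) + \lambda$, where $CC$ is the single‑cluster cost of Section~\ref{sec:old_alg}. By Lemma~\ref{lemma:precompute}, after $O(n)$ prefix‑sum preprocessing each such weight is evaluated in $O(1)$ time, so the graph is presented implicitly by a constant‑time weight oracle, exactly as required by the algorithms of \cite{Wilber,Klawe}.

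Next I would verify the correspondence between paths and clusterings. A directed path $v_1 = v_{i_0} \to v_{i_1} \to \dots \to v_{i_t} = v_{n+1}$ partitions $x_1,\dots,x_n$ into the $t$ consecutive blocks $\{x_{i_{\ell-1}},\dots,x_{i_\ell - 1}\}$, $\ell = 1,\dots,t$, and its total weight is $\sum_{\ell=1}^{t} CC(i_{\ell-1}, i_\ell-1) + \lambda t$, which is precisely the regularized objective $\sum_{x}\min_{\mu}(x-\mu)^2 + \lambda k$ on that clustering with $k = t$ (each block being optimally centered at its arithmetic mean). Conversely, every clustering into consecutive blocks arises from such a path, and since the points lie on a line an optimal clustering may always be taken to consist of consecutive blocks. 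Hence a minimum‑weight $v_1$--$v_{n+1}$ path, minimized over paths of \emph{any} length, yields an optimal regularized clustering; reading off the edges of that path gives the clusters, and recovering the centroids costs $O(n)$ more time.

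It then remains to check that the weights satisfy the \concave (Monge) property so that the $O(n)$‑time algorithms of \cite{Wilber,Klawe} apply. This is immediate: the inequality $CC(v,b) + CC(u,a) \le CC(u,b) + CC(v,a)$ for $u<v$, $a<b$ is exactly Equation~\eqref{eq:monge} from the proof of Lemma~\ref{lem:monotone}, and adding the constant $\lambda$ to each of the four terms leaves it unchanged; moreover the graph is acyclic since every edge strictly increases the node index. Thus all hypotheses of the cited least‑weight‑path algorithms are met, and we obtain the optimal regularized clustering in $O(n)$ time on sorted input (with an extra $O(n\log n)$ for sorting otherwise).

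The step I expect to require the most care — rather than being a genuine obstacle — is making sure the reduction targets the \emph{unrestricted}‑length shortest‑path problem solved in $O(n)$ by \cite{Wilber,Klawe}, as opposed to the fixed‑length version used in Theorem~\ref{thm:fast_clustering}; once that is in place the theorem is essentially a direct invocation of those results. A minor point worth one sentence is that the "all points distinct" assumption used in the motivating discussion of the thresholds $\lambda_n < \dots < \lambda_1$ is not needed for the theorem itself: the reduction and the \concave inequality hold verbatim for multisets.
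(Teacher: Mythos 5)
Your proposal is correct and follows essentially the same route as the paper: fold the per-cluster penalty $\lambda$ into each edge weight of the DAG from Section~\ref{sec:fastest_algo}, observe that adding a constant to every edge preserves the \concave property of Equation~\eqref{eq:monge}, and invoke the unrestricted-length least-weight-path algorithms of \cite{Wilber,Klawe} to get $O(n)$ time. The extra details you supply (the path--clustering correspondence and the explicit check that the Monge inequality survives the shift by $\lambda$) are exactly the steps the paper leaves implicit.
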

Now if we actually use $\lambda_{k}$ as the cost per cluster in regularized 1D $k$-Means, or any $\lambda \in [\lambda_{k}, \lambda_{k+1}]$ there is an optimal regularized cost solution that uses $k$ clusters which is also an optimal $k$-Means clustering.

This leads to an algorithm for 1D $k$-Means based on a (parametric) binary search for such a parameter $\lambda$ starting with an initial interval of $[\lambda_1, \lambda_{n-1}]$ (both easily computed).
However, there are a few things that need to be considered to ensure that such a binary search can return an optimal clustering.
Most importantly, it may be the case that $\lambda_k = \lambda_{k+1}$
This occurs when the decrease in the clustering cost for going from $k$ to $k+1$ clusters is the same as the decrease when going from $k-1$ to $k$ clusters.
In this case the regularized cost with $\lambda=\lambda_{k} = \lambda_{k+1}$  using $k+1$ and $k$ clusters are the same.
This generalizes to ranges $\lambda_{j+1} < \lambda_j= \lambda_{j-1} = \dots = \lambda_i < \lambda_{i-1}$ for some $j > i+1$.
In this case the optimal regularized cost of using any $k \in \{i+1,\dots,j\}$ is found at $\lambda_{j}$ and even if we can find this value we still need a way to extract an optimal regularized clustering with exactly $k$ clusters.
Fortunately, it is straight forward to compute both an optimal cost regularized clustering with the fewest clusters and an optimal regularized clustering with the most clusters and an optimal regularized clustering with any number of clusters in between these two \cite{Aggarwal1994, Schieber}.

In conclusion a binary search on $\lambda$ ends when the algorithm probes a value where the the number of clusters possible in the optimal regularized clustering contains $k$.
In the worst case this interval is a point, so with $\lg U$ bits integer coordinate input points, this takes in $O(\lg U)$ steps in the worst case.

  \begin{Theorem}
    Computing an optimal $k$-Means clustering of an input of size $n$ takes $O(n \lg U)$ time and $O(n)$ space.
\end{Theorem}


\section{Space Reduction for Dynamic Programming and 1D $k$-Means Reporting Data Structure}
\label{subsec:lowspace}
The space consumption of the $O(kn)$ time dynamic programming algorithm is $O(kn)$ for storing the tables $D$ and $T$ (Equation \ref{equ:rec_cost}, \ref{equ:rec_argmin}). This can easily be reduced to $O(n)$ by not storing $T$ at all and only storing the last two considered rows $D$ during the algorithm (computation of each row of $D$ depends only on the values of the previous row). However, doing this has the downside that we can no longer report an optimal $k$-Means clustering, only the cost of one. This problem can be handled by  a space reduction technique of Hirschberg \cite{Hirschberg} that reduce the space usage of the $O(kn)$ dynamic programming algorithm to just $O(n)$ while maintaining $O(kn)$ running time. For completeness we have given this constuction in the appendix.
This is the standard way of saving space for dynamic programming algorithms and is also applied for a monge concave problem in \cite{GolinZ10}. The problem is that it does cost a constant factor in the running time.

However, for the 1D $k$-Means problems (and other monge concave problems),  we actually do not need to bog down the dynamic programming algorithm at all.
So instead  simply do as follows:
First run the dynamic programming algorithm for $k+1$ clusters, storing only the last two rows as well as the last column of the dynamic programming table $D$. Remeber that the $i$'th entry in the last column stores the cost of the optimal $k$-Means clustering with $i$ clusters i.e. $\OPT_i$. This takes $O(kn)$ time uses linear space and returns $\OPT_1,\dots,\OPT_k,\OPT_{k+1}$.
Given these optimal costs, to extract an optimal clustering compute $\lambda_k = \OPT_{k}-\OPT_{k+1}$ and apply the regularized $k$-Means clustering algorithm with $\lambda_k$.
By our analysis in Section \ref{sec:reg}, this $\lambda_k$ yields an optimal regularized clustering of size $k$, and the algorithm directly returns an actual optimal clustering.
Of course this works for any $k'\leq k$, storing only the optimal $k$-Means costs $\OPT_1,\dots,\OPT_k,\OPT_{k+1}$ we can report an optimal clustering for any $k' \leq k$ in $O(n)$ time, completely forgoing the need to store dynamic programming tables.
We directly get
\begin{Theorem}
  Computing an optimal $k$-Means clustering of a sorted input of size $n$ takes $O(k n)$ time and  $O(n)$ space.
\end{Theorem}
To compute the cost of the optimal clustering for all $k' \leq k$ we keep the last column of the cost matrix $D$ which requires an additional $O(k) = O(n)$ space.

\begin{Theorem}
  There is a data structure that uses $O(n)$ space that can report the optimal 1D $k$-Means clustering for any $k$ in $O(n)$ time.
\end{Theorem}
Note that the best preprocessing time we know for this data structure is $O(n^2)$ by simply running the dynamic programming algorithm with $k=n$, but of course it can be constructed using a smaller $k$ at the cost of only being able to report optimal clusterings up to size $k$.
Furthermore,  this construction does not depend on actual using the $k$-Means cost function, just the Monge Concave property and the directed graph shortest path problem and thus generalize to all problems considered by Schieber \cite{Schieber}.

\section{Extending to More Distance Measures}
\label{sec:extensions}
In the following we show how all the above algorithms generalize to all
 Bregman Divergences and the sum of absolute distances while retaining
 the same running time and space consumption.

\subsection{Bregman Divergence Clusterings}
First, let us remind ourselves what a  Bregman Divergence and a Bregman Clustering is.
Let $f$ be a differentiable real-valued strictly convex function.
The Bregman Divergence $D_f$ defined by $f$ is defined as
\begin{align*}
  D_f(x,y) = f(x) - f(y) - \nabla_f(y)(x-y)
\end{align*}

\paragraph{Bregman Clustering.}
The Bregman Clustering problem as defined in \cite{Banerjee:jmlr}, is to find a set of centers,  $\M=\{\mu_1,...,\mu_k\}$, that minimizes
\begin{align*}
  \sum_{x \in \mathcal{X}} \min_{\mu\in \M} D_f(x,\mu)
\end{align*}
Notice that the cluster center is the second argument of the Bregman
Divergence. This is important since Bregman Divergences are not in general symmetric.

For the purpose of 1D clustering, we mention two important properties of Bregman Divergences.
For any Bregman Divergence, the unique element that minimizes the summed distance to a multiset of elements is the mean of the elements, exactly as it was for squared Euclidian distance.
This is in a sense the defining property of Bregman Divergences \cite{Banerjee:jmlr}.
The second property is the linear separator property, which is crucial for clustering with Bregman Divergences but also for Bregman Voronoi Diagrams \cite{Banerjee:jmlr,Boissonnat2010}.
\paragraph{Linear Separators For Bregman Divergences.}
For all Bregman divergences, the locus of points that are equidistant to two fixed points $\mu_1, \mu_2$ in terms of a Bregman divergence is given by $\{x \in \mathcal{X} \mid D_f(x,p) = D_f(x,q)\}$.
Plugging in the definition of a Bregman Divergence this is
\begin{align*}
  \{x \in \mathcal{X} \mid D_f(x,p) = D_f(x,q)\}
  = \{x \in \mathcal{X} \mid x (\nabla_f(\mu_1) -  \nabla_f(\mu_2))
  = f(\mu_1)-\mu_1 \nabla_f(\mu_1) - f(\mu_2) + \mu_2 \nabla_f(\mu_2)\}
\end{align*}
which is a hyperplane.
The points $\mu_1, \mu_2$ sit on either side of the hyperplane and the Voronoi cells defined by Bregman divergences are connected.

This means, in particular, that between any two points in 1D, $\mu_1 < \mu_2$,
there is a hyperplane (point) $h$ with $\mu_1 < h < \mu_2$ and all points smaller than $h$ are closer to $\mu_1$ and all points larger than $h$ are closer to $\mu_2$.
For 1D Bregman Divergence Clustering it means the optimal clusters correponds to intervals (as was the case in 1D $k$-Means).
We capture what we need from this observation in a simple ``distance'' lemma:
\begin{lemma}
  \label{lemma:bregman_distance}
  Given two fixed real numbers $\mu_1 < \mu_2$, then for   any point $x_r\geq \mu_2$, we have $D_f(x_r,\mu_1) > D_f(x_r,  \mu_2)$,
  and for any point $x_l \le \mu_1$ we have $D_f(x_l,\mu_1) <  D_f(x_l, \mu_2)$
\end{lemma}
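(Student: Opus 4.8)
The plan is to reduce everything to a single affine function of one real variable. I would define $g(x) = D_f(x,\mu_1) - D_f(x,\mu_2)$ and show that $g$ is affine and strictly increasing, so that its sign is controlled by where it vanishes; then I would locate that sign by evaluating $g$ at the two centers $\mu_1$ and $\mu_2$.

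First I would expand $g$ using the definition $D_f(x,y) = f(x) - f(y) - \nabla_f(y)(x-y)$. The two $f(x)$ terms cancel, leaving
\begin{align*}
g(x) = \big(f(\mu_2)-f(\mu_1)\big) + x\big(\nabla_f(\mu_2)-\nabla_f(\mu_1)\big) - \mu_2\nabla_f(\mu_2) + \mu_1\nabla_f(\mu_1),
\end{align*}
which is an affine function of $x$ whose slope is $\nabla_f(\mu_2)-\nabla_f(\mu_1)$. Since $f$ is differentiable and strictly convex, $\nabla_f$ is strictly increasing, so from $\mu_1 < \mu_2$ the slope is strictly positive and $g$ is strictly increasing on $\R$.

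Next I would pin down the sign of $g$ at the endpoints. Evaluating at $x=\mu_1$ gives $g(\mu_1) = D_f(\mu_1,\mu_1) - D_f(\mu_1,\mu_2) = -D_f(\mu_1,\mu_2)$, and strict convexity together with $\mu_1 \neq \mu_2$ forces $D_f(\mu_1,\mu_2) > 0$, so $g(\mu_1) < 0$. Symmetrically $g(\mu_2) = D_f(\mu_2,\mu_1) - D_f(\mu_2,\mu_2) = D_f(\mu_2,\mu_1) > 0$. Monotonicity then closes both claims at once: for any $x_l \le \mu_1$ we get $g(x_l) \le g(\mu_1) < 0$, i.e.\ $D_f(x_l,\mu_1) < D_f(x_l,\mu_2)$; and for any $x_r \ge \mu_2$ we get $g(x_r) \ge g(\mu_2) > 0$, i.e.\ $D_f(x_r,\mu_1) > D_f(x_r,\mu_2)$.

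I do not expect a genuine obstacle here. The only step needing a word of justification is that $D_f(p,q) > 0$ whenever $p \neq q$; this is the standard fact that the Bregman divergence of a strictly convex $f$ is the (strictly positive, off the diagonal) gap in the first-order Taylor underestimate of $f$, which is the same fact that makes $\nabla_f$ strictly increasing. Everything else is the cancellation in the expansion of $g$ and the trivial observation that an affine function with positive slope is strictly increasing, so the write-up should be short.
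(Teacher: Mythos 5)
Your proof is correct, and it is essentially the argument the paper has in mind: the affine function $g(x)=D_f(x,\mu_1)-D_f(x,\mu_2)$ you expand is exactly the paper's displayed locus equation for the linear separator, and your sign analysis at $\mu_1$ and $\mu_2$ makes rigorous the paper's assertion that the two centers sit on opposite sides of the separating point. The paper leaves these details to the cited linear-separator property of Bregman divergences, whereas you supply the missing justifications (positivity of the slope via strict monotonicity of $\nabla_f$, and $D_f(p,q)>0$ for $p\neq q$) explicitly; no gap.
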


\paragraph{Computing Optimal Cluster Costs for Bregman Divergences.}
Since the mean minizes Bregman Divergences, the centroids used in optimal clusterings are unchanged compared to the $k$-Means case.
The prefix sums idea used to implement the data structure used for Lemma \ref{lemma:precompute} generalizes to Bregman Divergences as observed in \cite{Nielsen2014OptimalIC} (under the name Summed Area
Tables). The formula for computing the cost of grouping the points $x_i,\dots, x_j$ in one cluster is as follows. Let $\mu_{i,j} = \frac{1}{j-i+1}\sum_{\ell=i}^j x_\ell$ be the arithmetic mean of the points $x_i,\dots,x_j$, then
\begin{align*}
  CC(i,j) & = \sum_{\ell=i}^j D_f(x_\ell, \mu_{i,j})\\
  & = \sum_{\ell=i}^j f(x_\ell) - f(\mu_{i,j}) - \nabla_f (\mu_{i,j})(x_\ell - \mu_{i,j})\\
  & = \left(\sum_{\ell=i}^j f(x_\ell)\right) - (j-i+1) f(\mu_{i,j})  -  \nabla_f (\mu_{i,j}) \left(\left(\sum_{\ell=i}^j x_\ell\right) - (j-i+1) \mu_{i,j}\right)
\end{align*}
%
%
Rearranging terms this is
$$
\left( \sum_{\ell=i}^j f(x_\ell)\right) - (j-i+1) f(\mu_{i,j})\\
-  \nabla_f (\mu_{i,j}) \left(\left(\sum_{\ell=i}^j x_\ell\right) - (j-i+1) \mu_{i,j}\right).
$$
%
Thus the Bregman Divergence cost of a consecutive subset of input points can be computed in in constant time with stored prefix sums for $x_1, \dots ,x_n$ and $f(x_1), \dots ,f(x_n)$.

\subsubsection{Totally Monotone Matrix and Monge Concave property for Bregman Divergences}

\begin{lemma}
  For any Bregman Divergence $D_f$ the induced cluster cost $CC(i, j)$ function is concave.
  Formally, for any $a < b$ and $u < v$, it holds that:
  $$
  CC(v, b) + CC(u, a)  \leq  CC(u,b) + CC(v, a).
  $$
\end{lemma}
\begin{proof}
  We start by handling the special case where $v > a$. In this case, we have by definition that $CC(v,a)=0$, thus we need to show that $CC(v, b) + CC(u, a)  \leq  CC(u,b)$.
  This is the case since any point amongst $x_u,\dots,x_b$ is included in at most one of $x_v,\dots,x_b$ and $x_u,\dots,x_a$ (since $a < v$).
  Thus $CC(v, b) + CC(u, a)$ is the cost of taking two disjoint and consecutive subsets of the points $x_u,\dots,x_b$ and clustering the two sets using the optimal choice of centroid in each.
  Clearly this cost is at most the cost of clustering all the points using one centroid (both groups could use the same cluster center as the cluster center for all the points).

  We now turn to the general case where $u < v \leq a < b$. Let $\mu_{v,a}$ be the mean of $x_{v},\dots, x_{a}$ and $\mu_{u,b}$ be the mean of $x_u,\dots,x_b$ and assume that $\mu_{v,a} \leq \mu_{u,b}$ (the other case is symmetric).
  Finally, let $CC(w,z)_\mu = \sum_{\ell=w}^z D_f(x_\ell,\mu)$ denote the cost of grouping the elements $x_w,\dots,x_z$ into a cluster with centroid $\mu$.

  Split the cost $CC(u,b)$ into the cost of the elements $x_u,\dots,x_{v-1}$ and the cost of the elements $x_v,\dots,x_b$ i.e.
  \begin{align*}
    CC(u,b) &
    = \sum_{\ell=u}^{v-1} D_f(x_\ell, \mu_{u,b}) + \sum_{\ell=v}^b D_f(x_\ell, \mu_{u,b}) = CC(u,v-1)_{\mu_{u,b}} + CC(v,b)_{\mu_{u,b}}.
  \end{align*}
  We trivially get  $CC(v,b)_{\mu_{u,b}} \geq CC(v,b)$ since $CC(v,b)$ is the cost using the optimal centroid.

  Since $\mu_{v,a} \leq \mu_{u,b}$ and all elements $x_u,\dots,x_{v-1}$ are less than or equal to $\mu_{v,a}$ (since $\mu_{v,a}$ is the mean of points $x_{v},\dots,x_a$ that all are greater than $x_u,\dots,x_{v-1}$), then by Lemma \ref{lemma:bregman_distance},
  $$
  CC(u,v-1)_{\mu_{u,b}} \geq CC(u, v-1)_{\mu_{v,a}}
  $$
  Adding the cluster cost $CC(v, a)$ to both sides of this inequality we get that
    \begin{align*}
      CC(u,v-1)_{\mu_{u,b}} + CC(v, a)
      \geq CC(u,v-1)_{\mu_{v,a}} + CC(v, a)
      =  CC(u, a)_{\mu_{v,a}} \geq CC(u, a)
    \end{align*}
    %
    Combining the results, 
    \begin{align*}
       CC(v, b) + CC(u, a)
      \leq CC(v,b)_{\mu_{u,b}} + CC(u,v-1)_{\mu_{u,b}} + CC(v, a)
      = CC(u,b) + CC(v,a).
    \end{align*}
\end{proof}

It follows that all the results achieved for 1D $k$-Means presented earlier generalize to any Bregman Divergence.

\subsection{$k$-Median Clustering}
\label{sec:kmediod}
For the $k$-Medians problem we replace the the sum of squared
Euclidian distances with the sum of absolute distances.  Formally, the
$k$-Medians problem is to compute a clustering, $\M=\{\mu_1,...,\mu_k\}$,  minimizing
\begin{align*}
	\sum_{x\in\X}\min_{\mu\in\M}\lvert x-\mu \rvert
\end{align*}
Note that in 1D, all $L_p$ norms are the same and reduce to this case. Also note that the minimizing centroid for a cluster is no longer the mean
of the points in that cluster, but the median. To solve this problem, we change the centroid to be the median, and if the number
of points is even, we fix the median to be the exact middle point between the two middle elements, making the choice of centroid unique.

As for Bregman Divergences, we need to show that we can compute the
cluster cost $CC(i,j)$ with any Bregman Divergence in constant time. Also, we need to
compute the centroid in constant time and  argue that the
cost is monge moncave which implies the implicit matrix $\COST_i$ is totally monotone.
The arguments are essentially the same, but for completeness we briefly cover them below.

\paragraph{Computing Cluster Costs for Absolute Distances.}
Not surprisingly, prefix sums still allow constant time
computation of $CC(i,j)$. 
Let $m_{i,j} = \frac{j+i}{2}$, and compute the centroid as $\mu_{i,j} = \frac{x_{\lfloor m_{i,j} \rfloor} + x_{\lceil m_{i,j} \rceil} }{2}$ then
%
$$
CC(i,j)  = \sum_{\ell=i}^j \lvert x_\ell - \mu_{i,j} \rvert 
= \sum_{\ell=i}^{\lfloor{m_{i,j}\rfloor}}  \mu_{i,j} -x_\ell   + \sum_{\ell=1+\lfloor{m_{i,j}\rfloor}}^{j}  x_\ell - \mu_{i,j}
$$
which can be computed in constant time with access to a prefix sum
table of $x_1,\dots,x_n$. This was also observed in \cite{Nielsen2014OptimalIC}.

\paragraph{Monge Concave - Totally Monotone Matrix.}
The monge concave and totally monotone matrix argument above for
Bregman Divergences (and for squared Euclidian distance) remain valid
since first of all, we still have $x_u,\dots,x_{v-1} \leq \mu_{v,a}$
as $\mu_{v,a}$ is the median of points all greater than
$x_u,\dots,x_{v-1}$.  Furthermore, it still holds that when $\mu_{v,a}
\leq \mu_{u,b}$ and all elements $x_u,\dots,x_{v-1}$ are less than or
equal to $\mu_{v,a}$, then $CC(u,v-1)_{\mu_{u,b}} + CC(v, a) \geq
CC(u,v-1)_{\mu_{v,a}} + CC(v, a) = CC(u, a)_{\mu_{v,a}}$.
%
%
It follows that the  algorithms we specified for 1D $k$-Means generalize to the 1D $k$-Median problem.


%
%

\section{Experiments}
To asses the  practicality of 1D $k$-Means algorithms we have implemented and compared different versions of the  $O(kn)$ time algorithm and the $O(n\lg U)$ time binary search algorithm.
We believe Schieber's $n 2^{O(\sqrt{ \lg \lg n \lg k})}$ time algorithm \cite{Schieber} is mainly of theoretical interest, at least the constants involved in the algorithm seem rather large.
Similarly for the  $O(k \sqrt{n \log n})$ time algorithm of \cite{Aggarwal1994}. Both are slightly similar to the $O(n \lg U)$ time algorithm in the sense they work by trying to find the right $\lambda$ value for the regularized $k$-Means problem so that it uses $k$ clusters. However, to get data independent bounds (independent of $U$) the algorithms become much more complicated. 
For reasonable values of $k$ we expect the simple $O(kn)$ time algorithm to be competitive with these more theoretical efficient algorithms.
Our main interest is seeing how the  binary search algorithm compares.

\subsection{Data Sets}
For the experiments we consider two data sets.
\begin{description}
  \item[Uniform:] The \texttt{Uniform} data set is, as the name suggest, created by sampling the required number of points uniformly at random between zero and one.
  \item[Gaussian Mixture:] The \texttt{Gaussian} Mixture data set is created by defining 16 Gaussian Distributions, each with variance 100, and means placed one million apart.
    Each data point required is created by uniformly at random sampling one of the 16 gaussians and then sample a point from that.
\end{description}

\subsection{Algorithm Setup}
\label{sec:experiments}

\paragraph{Dynamic Programming.}
For the dynamic programming algorithm we have implemented the $O(kn\lg n)$ and the $O(kn)$ time algorithm.
The difference lies only in how the monotone matrix search is performed (Section \ref{sec:fully_monotone}).
The latter using the SMAWK algorithm and the former a simple divide and conquer approach.
This is done because the SMAWK algorithm may in fact be slower in practice.
Both versions have been implemented using only $O(n)$ space in two different ways, the first version only storing the last row of the dynamic programming matrix and the second version reducing space using the Hirschberg space saving technique (Section \ref{subsec:lowspace}).


\paragraph{Binary Search.}
For the  binary search algorithm we have considered two algorithms for regularized $k$-Means, namely the algorithms in \cite{Wilber} and \cite{Klawe}.
The invariants and pseudo-code in \cite{Klawe} are not entirely correct and does not directly turn into working code.
We were able to fix these issues and made the algorithm run, however, our implementation of \cite{Klawe} is clearly slower than our implementation of \cite{Wilber}.
Therefore, for the experiments we only consider the linear time algorithm from \cite{Wilber}.

Remember that $\lambda_i$ is the minimum value for $\lambda \ge 0$ where the optimal regularized cost of using $i$ clusters is less than the optimal regularized cost of using $i+1$ clusters.
The standard implementation is a binary search on the regularization coefficient ($\lambda$), that ends when a $\lambda$ that gives an optimal $k$-Means clustering is found. 
The starting range for $\lambda$ is $[0,M]$ where $M$ is the cost of clustering all points in one cluster ($\textrm{OPT}_1$).
For $\lambda = 0$, the optimal clustering uses $n$ clusters and has a cost of zero.
For $\lambda = M$, the optimal clustering uses one cluster, as its regularized cost is $2M$, and using at least $2$ clusters costs strictly more than $2M$ (assuming any cluster costs more than $0$).
The algorithm simply maintains the current range for $\lambda$, $\lambda_\textrm{low}$ and  $\lambda_\textrm{high}$. We denote by $k_\textrm{low}$, and $k_\textrm{high}$ the number of clusters used in the optimal regularized cost solution at $\lambda_\textrm{low}$ and  $\lambda_\textrm{high}$ respectively.
Note that $k_\textrm{high} < k_\textrm{low}$, and let $I_k = \{k_\textrm{high}, \dots, k_\textrm{low}\}$ be the current interval for the number clusters that includes $k$.
Finally, let $c_\textrm{low}$ and $c_\textrm{high}$ denote the unregularized  cost of using $k_\textrm{low}$ and $k_\textrm{high}$ clusters respectively. 

Many different values for $\lambda$ give the same regularized clustering (same number of clusters), which means if we attempt two or more such values for $\lambda$, we make almost no progress.
Also, the middle of the current range for $\lambda$ does not in any way translate to the middle of the range $k_\textrm{high}, k_\textrm{low}$.
Early experiments showed that the standard binary search actually had this issue. %

To overcome this issue, we modified the search to work in a more data dependent way to guarantee progress by picking the next $\lambda$  as the value  where the regularized costs of using $k_\textrm{high}$ clusters and $k_\textrm{low}$ clusters are the same.
At this position we are guaranteed that there is an optimal cost regularized clustering using $k'$ clusters where $k'$ is strictly contained in $k_\textrm{high}$ and $k_\textrm{low}$ which we prove below.


To be able to do this we update the algorithm to also maintain $k_\textrm{high}, k_\textrm{low}$, and $c_\textrm{high}, c_\textrm{low}$.  Given these computing $\lambda'$ is straight forward, namely.
$$
\lambda' = (c_\textrm{high} - c_\textrm{low})/(k_\textrm{low} - k_\textrm{high})
$$
which is the average increase in the optimal cost per cluster added.
\begin{lemma}
  \label{lem:intersection}
  Let $\lambda' = (c_\textrm{high}-c_\textrm{low})/(k_\textrm{low} - k_\textrm{high})$, at $\lambda'$ there is an optimal cost regularized clustering with $k'$ clusters  where $k'$ is contained in $\{k_\textrm{high}+1,\dots, k_\textrm{low}-1\}$
\end{lemma}
\begin{proof}
Consider the regularized $k$-Means cost as a function of $\lambda$. This is a piecewise linear function since it corresponds to the the minimum of $n$ lines, the $k$'th line representing the regularized cost of an optimal clustering using $k$ clusters, having slope $k$ and intercept (value at $\lambda=0$) the optimal (non-regularized) cost of using $k$ clusters.
For each line the subset of the real line where this line is the minimun of the $n$ lines is a line segment, and the line segment's end points correspond to when the number of clusters in the optimal cost regularized clustering change.

This means that the value $\lambda'$ we try in the binary search is the intersection of the two line segments corresponding to $\lambda_\textrm{low}$ and $\lambda_\textrm{high}$.
At this $\lambda'$ the regularized cost of using $k_\textrm{low}$ clusters and $k_\textrm{high}$ clusters is the same.
For all $k'$ betwen $k_\textrm{high}$ and $k_\textrm{low}$ the regularized cost of the optimal regularized clustering at $\lambda'$ is at most that of using $k_\textrm{low}$ or $k_\textrm{high}$ clusters (that are the same).

Consider a number of clusters $k'$ strictly between $k_\textrm{low}$ and $k_\textrm{high}$ and the associated line for using $k'$ clusters. This line must intersect the line for $k_\textrm{low}$ at some value $x \leq \lambda'$, otherwise the line can never be the minimum line since for $x < \lambda'$ it would be above the line for $k_\textrm{low}$ because of its lower slope, and for $x > \lambda'$ it would be above the line for $k_\textrm{high}$ due to higher slope.
Finally, since the line for $k'$ clusters has lower slope than the line for $k_\textrm{low}$ and intersects it at $x \leq \lambda'$ the regularized cost at $\lambda'$ of using $k'$ clusters is at most the cost of using $k_\textrm{low}$.
\end{proof}
Notice, that if $k_\textrm{high}$ clusters achieve the smallest regularized cost at $\lambda'$, then the regularized cost of all $k'\in I_k$ at $\lambda'$ are the same and the optimal $k$-Means clustering is found.

The unfortunate side effect of using this strategy is the running time of $O(n\lg U$) may no longer hold. That is easily remedied by for instance every $t=O(1)$ steps query the midpoint of the remaining range, however no such trick is used in our implementation. 

We refer to the standard binary search algorithm as \emph{wilber-binary} and the intersection binary search as \emph{wilber-interpolation}.
For a comparison of these two algorithms,  see Figure \ref{fig:runtime_binsearch}.
In these plots the search based on the intersection/interpolation is clearly superior on both data sets and and we consider only this version of the binary search  in the following runtime comparison with the dynamic programming algorithms.

\setcounter{figure}{0}
\begin{figure}[t]
\centering
\begin{subfigure}{.5\textwidth}
  \centering
  \includegraphics[width=.9\linewidth]{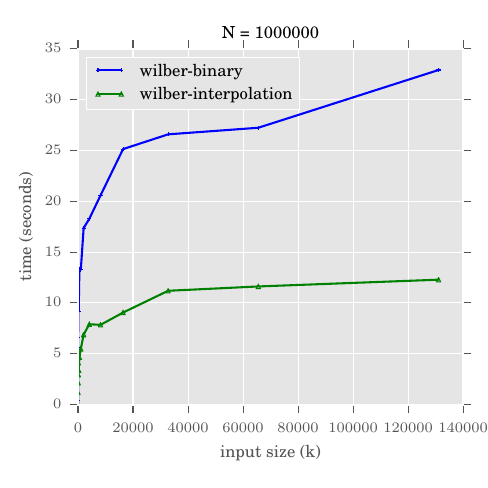}
  \caption{Running times on \texttt{Gaussian}  Data Set}
  \label{fig:runtime_n}
\end{subfigure}%
\begin{subfigure}{.5\textwidth}
  \centering
  \includegraphics[width=.9\linewidth]{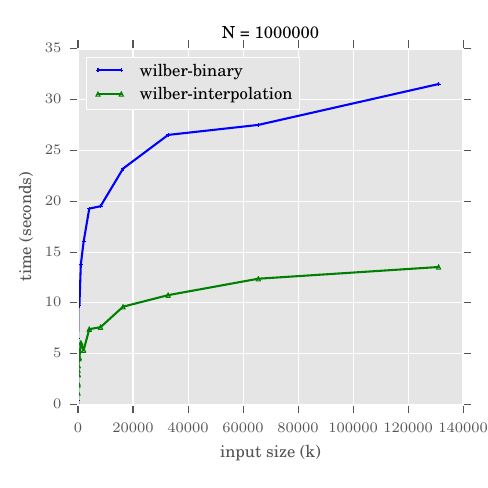}
  \caption{Running times on \texttt{Uniform} Data Set}
  \label{fig:runtime_k}
\end{subfigure}
\caption{Running time comparison binary search algorithms as a function of $k$}
\label{fig:runtime_binsearch}
\end{figure}


%

\subsection{Algorithm Comparison}
We refer to the $O(kn\log n)$ algorithm with space saving mentioned in Section \ref{sec:fully_monotone} as \texttt{DP-monotone} and the variant using the Hirschberg technique as \texttt{DP-monotone-hirsch}.
Similarly we call the $O(kn)$ algorithm (also with space saving) \texttt{DP-linear}, and \texttt{DP-linear-hirsch} when using the Hirschberg technique.
Finally we denote the $O(n \log U)$ algorithm \texttt{Wilber} since the regularized $k$-Means is an implementaion of Wilber's algorithm \cite{Wilber}. It should be noted  that the space saving technique really is necessary as $n$ and $k$ grow, since otherwise the space grows with the product, which is quite undesirable.

The experiments do \textbf{not} include the time for actually reporting a clustering which gives the \texttt{DP-linear} an advantage over the other algorithm since it would require an invocation of the \texttt{Wilber} algorithm to report a clustering, while  \texttt{DP-linear-hirsch} and \texttt{Wilber}  extracts an optimal clustering to report during the algorithm and would have no extra cost.

%
Figures \ref{fig:runtime_n_uniform}, \ref{fig:runtime_k_uniform}, and \ref{fig:runtime_k_gaussian} show the running time as a function of $n$ or $k$ on the \texttt{Uniform} data set.
The performance of the dynamic programming algorithms are as expected. These algorithm always fill out a table of size $kn$ and are thus never better than the worst case running time.
The plots also reveal that for the values of $n$ tested, the $O(kn \lg n)$ is in fact superior to the $O(kn)$ time algorithm albeit the difference is not large.
As the plots show, when $k$ grows, the \texttt{Wilber} algorithm is much faster than the other algorithms (even when $k=20$ for the smallest $n$ we tried). This is both true for \texttt{Uniform} and the \texttt{Gaussian} data set.
It is also worth noting that even for moderate values of $n$ and $k$ the space quickly goes in the order of gigabytes for the dynamic programming solutions, if we maintain the entire table.

%
Notice that the dynamic programming algorithm can report the clustering cost for all $k' \leq k$ using an additive $O(k)$ space by always keeping the final column of the dynamic programming table.
On the other hand \texttt{Wilber} cannot report the costs of all clusterings, but it can report \emph{some} of them, as it searches for a cluster cost $\lambda$ that yields a clustering with $k$ clusters.
For each $\lambda$ that is attempted, the cost of an optimal clustering using a different $k$ may be reported.
For practitioners that want to see the plot of the cost of the best clustering as a function of either $k$ or $\lambda$, the \texttt{Wilber} algorithm might still be sufficient, as it does provide points on that curve and one can even make an interactive plot: if a desired point on the curve is missing just compute it and add more points to the plot. For a new $\lambda$ this takes linear time and for a new $k$ we need to binary search the interval between the currently stored nearest neighbours.

The simple conclusion is that for these kinds of data, the binary search algorithm is superior even for moderate $n$ and $k$, and for large $n$, $k$ it is the only choice. If one prefers the guarantee of the dynamic programming algorithm, implementing \texttt{Wilber} allows for saving a non-trivial constant factor in the running time for linear space algorithms and allows reporting an optimal clustering for any $k'\leq k$ in linear time.

\begin{figure}[t]
  \centering
  \begin{subfigure}[b]{.3\textwidth}
    \centering
    \includegraphics[width=\textwidth]{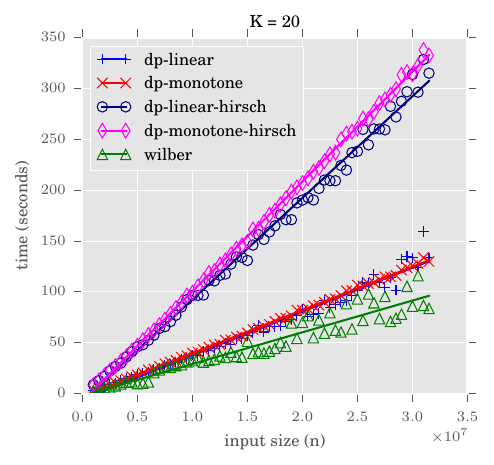}
    \caption{Running times as a function of $n$ on the \texttt{Uniform} data set}
    \label{fig:runtime_n_uniform}
  \end{subfigure}\hfill%
  \begin{subfigure}[b]{.3\textwidth}
    \centering
    \includegraphics[width=\textwidth]{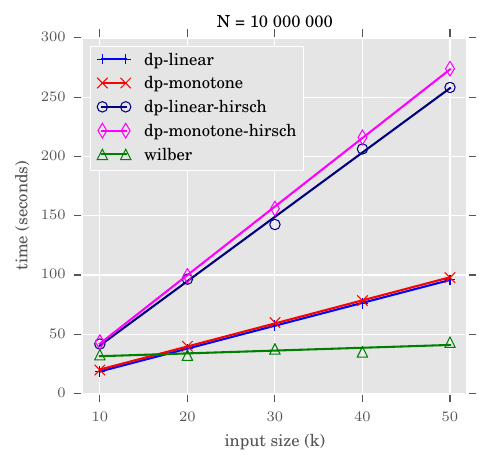}
    \caption{Running times as a function of $k$ on the \texttt{Uniform} data set}
    \label{fig:runtime_k_uniform}
  \end{subfigure}\hfill%
  \begin{subfigure}[b]{.3\textwidth}
    \centering
    \includegraphics[width=\textwidth]{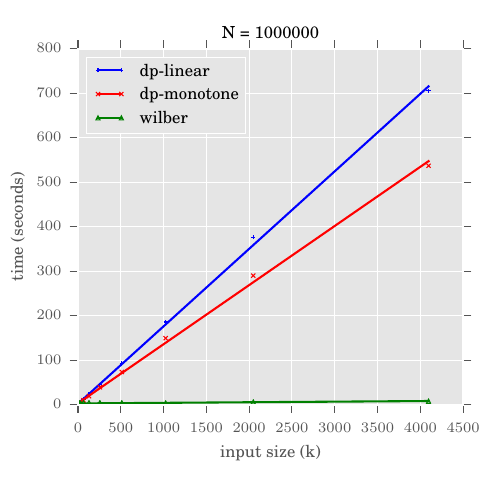}
    \caption{Running times as a function of $k$ on the \texttt{Gaussian} data set}
    \label{fig:runtime_k_gaussian}    
  \end{subfigure}
  \caption{Running time comparison of different 1D $k$-Means algorithms.
    dp-* are different versions of dynamic programming and wilber is the binary search based on interpolation}
\label{fig:runtime_all}
\end{figure}




\section*{Final Remarks}
We have given an overview of 1D $k$-Means algorithms,  generalized them to new measures, shown the practical performance of several algorithm variants including a simple way of boosting the binary search algorithm. We have defined the obvious regularized version of 1D $k$-Means which is important not only for fast algorithms based on binary search but also linear space solutions for reporting of actual optimal clusterings based on the dynamic programming algorithms.
We see a few important problems left open
\begin{itemize}
  \item Is there an $n \lg^{O(1)} n$ time algorithm for 1D $k$-Means or maybe even an $n\lg^{O(1)} k$ time algorithm (if the input is sorted)?
  \item Is there an  $n \lg^{O(1)} n$ or even $n^{2-O(1)}$ time algorithm for computing the optimal $k$-Means costs for all $k=1,\dots,n$ yielding  the sequence $\lambda_1,\dots, \lambda_{n-1}$ that encodes all relevant information for the given 1D $k$-Means instance?
  \item What is the running time of the search algorithm using the tweak we employed? An easy bound is $O(n^2)$ but we were not able to get such lousy running time in practice. In fact it seemed to be a really good heuristic for picking the next query point in the binary search.
  \item The dynamic programming algorithm with a running time of $O(kn)$ can rather easily be parallelized to run in $O((kn \lg n)/p)$ for $p$ processors, by parallelizing the monotone matrix search algorithm (not SMAWK but the simple $O(n\lg n)$
    divide and conquer algorithm).
    For the binary search algorithm, it is possible to try and improve
    the $\lg U$ to $\lg_p U$ for $p$ processors, but it would be much
    better if one could parallelize the linear time 1D regularized k-Means algorithm (or a near linear time version of it).
\end{itemize}

We wish to thank Pawel Gawrychowski for pointing out important earlier work 
on \concave  property.
\bibliographystyle{plain}
\bibliography{main}

\begin{thebibliography}{10}

\bibitem{Aggarwal1994}
A.~Aggarwal, B.~Schieber, and T.~Tokuyama.
\newblock Finding a minimum-weight-link path in graphs with the concave monge
  property and applications.
\newblock {\em Discrete {\&} Computational Geometry}, 12(3):263--280, 1994.

\bibitem{Aggarwal1987}
Alok Aggarwal, Maria~M. Klawe, Shlomo Moran, Peter Shor, and Robert Wilber.
\newblock Geometric applications of a matrix-searching algorithm.
\newblock {\em Algorithmica}, 2(1):195--208, 1987.

\bibitem{DBLP:journals/corr/AhmadianNSW16}
Sara Ahmadian, Ashkan Norouzi{-}Fard, Ola Svensson, and Justin Ward.
\newblock Better guarantees for k-means and euclidean k-median by primal-dual
  algorithms.
\newblock {\em CoRR}, abs/1612.07925, 2016.

\bibitem{Aloise2009}
Daniel Aloise, Amit Deshpande, Pierre Hansen, and Preyas Popat.
\newblock Np-hardness of euclidean sum-of-squares clustering.
\newblock {\em Machine Learning}, 75(2):245--248, 2009.

\bibitem{arnaboldi2012analysis}
Valerio Arnaboldi, Marco Conti, Andrea Passarella, and Fabio Pezzoni.
\newblock Analysis of ego network structure in online social networks.
\newblock In {\em Privacy, security, risk and trust (PASSAT), 2012
  international conference on and 2012 international confernece on social
  computing (SocialCom)}, pages 31--40. IEEE, 2012.

\bibitem{Arthur:2006:SKM:1137856.1137880}
David Arthur and Sergei Vassilvitskii.
\newblock How slow is the k-means method?
\newblock In {\em Proceedings of the Twenty-second Annual Symposium on
  Computational Geometry}, SCG '06, pages 144--153. ACM, 2006.

\bibitem{arthur2007k}
David Arthur and Sergei Vassilvitskii.
\newblock k-means++: The advantages of careful seeding.
\newblock In {\em Proceedings of the eighteenth annual ACM-SIAM symposium on
  Discrete algorithms}, pages 1027--1035. Society for Industrial and Applied
  Mathematics, 2007.

\bibitem{DBLP:conf/compgeom/AwasthiCKS15}
Pranjal Awasthi, Moses Charikar, Ravishankar Krishnaswamy, and Ali~Kemal Sinop.
\newblock The hardness of approximation of euclidean k-means.
\newblock In {\em 31st International Symposium on Computational Geometry, SoCG
  2015, June 22-25, 2015, Eindhoven, The Netherlands}, pages 754--767, 2015.

\bibitem{Banerjee:jmlr}
Arindam Banerjee, Srujana Merugu, Inderjit~S. Dhillon, and Joydeep Ghosh.
\newblock Clustering with bregman divergences.
\newblock {\em J. Mach. Learn. Res.}, 6:1705--1749, December 2005.

\bibitem{Boissonnat2010}
Jean-Daniel Boissonnat, Frank Nielsen, and Richard Nock.
\newblock Bregman voronoi diagrams.
\newblock {\em Discrete {\&} Computational Geometry}, 44(2):281--307, 2010.

\bibitem{GolinZ10}
Mordecai~J. Golin and Yan Zhang.
\newblock A dynamic programming approach to length-limited huffman coding:
  space reduction with the monge property.
\newblock {\em {IEEE} Trans. Information Theory}, 56(8):3918--3929, 2010.

\bibitem{Hirschberg}
D.~S. Hirschberg.
\newblock A linear space algorithm for computing maximal common subsequences.
\newblock {\em Commun. ACM}, 18(6):341--343, June 1975.

\bibitem{Hirschberg_Larmore}
D.~S. Hirschberg and L.~L. Larmore.
\newblock The least weight subsequence problem.
\newblock {\em SIAM Journal on Computing}, 16(4):628--638, 1987.

\bibitem{jeske2013genome}
Olga Jeske, Mareike Jogler, J{\"o}rn Petersen, Johannes Sikorski, and Christian
  Jogler.
\newblock From genome mining to phenotypic microarrays: Planctomycetes as
  source for novel bioactive molecules.
\newblock {\em Antonie Van Leeuwenhoek}, 104(4):551--567, 2013.

\bibitem{Klawe}
Maria~M. Klawe.
\newblock A simple linear time algorithm for concave one-dimensional dynamic
  programming.
\newblock Technical report, Vancouver, BC, Canada, Canada, 1989.

\bibitem{Lee201740}
Euiwoong Lee, Melanie Schmidt, and John Wright.
\newblock Improved and simplified inapproximability for k-means.
\newblock {\em Information Processing Letters}, 120:40--43, 2017.

\bibitem{Mahajan2009}
Meena Mahajan, Prajakta Nimbhorkar, and Kasturi Varadarajan.
\newblock {\em The Planar k-Means Problem is NP-Hard}, pages 274--285.
\newblock Springer Berlin Heidelberg, Berlin, Heidelberg, 2009.

\bibitem{Nielsen2014OptimalIC}
Frank Nielsen and Richard Nock.
\newblock Optimal interval clustering: Application to bregman clustering and
  statistical mixture learning.
\newblock {\em IEEE Signal Process. Lett.}, 21:1289--1292, 2014.

\bibitem{pennacchioli2014retail}
Diego Pennacchioli, Michele Coscia, Salvatore Rinzivillo, Fosca Giannotti, and
  Dino Pedreschi.
\newblock The retail market as a complex system.
\newblock {\em EPJ Data Science}, 3(1):1, 2014.

\bibitem{Schieber}
Baruch Schieber.
\newblock Computing a minimum weight-link path in graphs with the concave monge
  property.
\newblock {\em Journal of Algorithms}, 29(2):204 -- 222, 1998.

\bibitem{Vattani2011}
Andrea Vattani.
\newblock k-means requires exponentially many iterations even in the plane.
\newblock {\em Discrete {\&} Computational Geometry}, 45(4):596--616, 2011.

\bibitem{ckmeans}
Haizhou Wang and Joe Song.
\newblock Ckmeans.1d.dp: Optimal and fast univariate clustering; {R} package
  version 4.0.0., 2017.

\bibitem{wang2011ckmeans}
Haizhou Wang and Mingzhou Song.
\newblock Ckmeans. 1d. dp: optimal k-means clustering in one dimension by
  dynamic programming.
\newblock {\em The R Journal}, 3(2):29--33, 2011.

\bibitem{Wilber}
Robert Wilber.
\newblock The concave least-weight subsequence problem revisited.
\newblock {\em Journal of Algorithms}, 9(3):418 -- 425, 1988.

\bibitem{Wu_quant}
Xiaolin Wu.
\newblock Optimal quantization by matrix searching.
\newblock {\em J. Algorithms}, 12(4):663--673, December 1991.

\bibitem{Yao_F}
F.~Frances Yao.
\newblock Efficient dynamic programming using quadrangle inequalities.
\newblock In {\em Proceedings of the Twelfth Annual ACM Symposium on Theory of
  Computing}, STOC '80, pages 429--435. ACM, 1980.

\end{thebibliography}
\appendix
\section{Reducing space usage of $O(kn)$ time dynamic programming algorithm using Hirschberg}
Remeber that each row of $T$ and $D$ (Equation \ref{equ:rec_cost}, \ref{equ:rec_argmin}) only refers to the previous row. Thus one can clearly ``forget''row $i-1$ when we are done computing row $i$
In the following, we present an algorithm that avoids the table $T$ entirely.

The key observation is the following: Assume $k>1$ and that for every prefix $x_1,\dots,x_m$, we have computed the optimal cost of clustering $x_1,\dots,x_m$ into $\lfloor k/2 \rfloor$ clusters. Note that this is  the set of values stored in the $\lfloor k/2 \rfloor$'th row of $D$. Assume furthermore that we have computed the optimal cost of clustering every suffix $x_m,\dots,x_n$ into $k-\lfloor k/2 \rfloor$ clusters. Let us denote these costs by $\tilde{D}[k-\lfloor k/2 \rfloor][m]$ for $m=1,\dots,n$. Then clearly the optimal cost of clustering $x_1,\dots,x_n$ into $k$ clusters is given by:
\begin{equation}
\label{eq:midsplit}
\begin{split}
D[k][n] =&  \min_{j=1}^n D[\lfloor k/2 \rfloor][j] + \tilde{D}[k-\lfloor k/2 \rfloor][j+1].
\end{split}
\end{equation}
The main idea is to first compute row $\lfloor k/2 \rfloor$ of $D$ and row $k-\lfloor k/2 \rfloor$ of $\tilde{D}$ using linear space. From these two, we can compute the argument $j$ minimizing \eqref{eq:midsplit}. We can then split the reporting of the optimal clustering into two recursive calls, one reporting the optimal clustering of points $x_1,\dots,x_j$ into $\lfloor k/2 \rfloor$ clusters, and one call reporting the optimal clustering of $x_{j+1},\dots,x_n$ into $k-\lfloor k/2 \rfloor$ clusters. When the recursion bottoms out with $k=1$, we can clearly report the optimal clustering using linear space and time as this is just the full set of points.

From Section~\ref{sec:fully_monotone} we already know how to compute row $\lfloor k/2 \rfloor$ of $D$ using linear space: Simply call SMAWK to compute row $i$ of $D$  for $i=1,\dots, \lfloor k/2 \rfloor$, where we throw away row $i-1$ of $D$ (and don't even store $T$) when we are done computing row $i$. Now observe that table $\tilde{D}$ can be computed by taking the points $x_1,\dots,x_n$ and reversing their order by negating the values. This way we obtain a new ordered sequence of points $\tilde{X} = \tilde{x}_1 \leq \tilde{x}_2 \leq \cdots \leq \tilde{x}_n$ where $\tilde{x}_i = -x_{n-i+1}$.
Running SMAWK repeatedly for $i=1,\dots,k-\lfloor k/2 \rfloor$ on the point set $\tilde{X}$ produces a table $\hat{D}$ such that $\hat{D}[i][m]$ is the optimal cost of clustering $\tilde{x}_1,\dots,\tilde{x}_m = -x_n,\dots,-x_{n-m+1}$ into $i$ clusters. Since this cost is the same as clustering $x_{n-m+1},\dots,x_n$ into $i$ clusters, we get that the $(k-\lfloor k/2\rfloor)$'th row of $\hat{D}$ is identical to the $i$'th row of $\tilde{D}$ if we reverse the order of the entries.

To summarize the space saving algorithm for reporting the optimal clustering, does as follows: Let $L$ be an initially empty output list of clusters. If $k=1$, append to $L$ a cluster containing all points. Otherwise ($k > 1$), use SMAWK on $x_1,\dots,x_n$ and $-x_n,\dots,-x_1$ to compute row $\lfloor k/2 \rfloor$ of $D$ and row $k-\lfloor k/2 \rfloor$ of $\tilde{D}$ using linear space (by evicting row $i-1$ from memory when we have finished computing row $i$) and $O(kn)$ time. Compute the argument $j$ minimizing \eqref{eq:midsplit} in $O(n)$ time. Evict row $\lfloor k/2 \rfloor$ of $D$ and row $k-\lfloor k/2 \rfloor$ of $\tilde{D}$ from memory. Recursively report the optimal clustering of points $x_1,\dots,x_j$ into $\lfloor k/2 \rfloor$ clusters (which appends the output to $L$). When this terminates, recursively report the optimal clustering of points $x_{j+1},\dots,x_n$ into $k-\lfloor k/2 \rfloor$ clusters. When the algorithm terminates, $L$ contains the optimal clustering of $x_1,\dots,x_n$ into $k$ clusters.

At any given time, the algorithm uses $O(n)$ space: We evict all memory used to compute the value $j$ minimizing \eqref{eq:midsplit} before recursing. Furthermore, we complete the first recursive call (and evict all memory used) before starting the second recursive call. Finally, for the recursion, we do not need to copy $x_1,\dots,x_j$. It suffices to remember that we are only working on the subset of inputs $x_1,\dots,x_j$.

Let $F(n,k)$ denote the time used by the above algorithm to compute an optimal clustering of $n$ sorted points into $k$ clusters. Then there is a constant $C>0$ such that $F(n,k)$ satisfies the recurrence:
$
F(n,1) \leq C n,
$
and for $k > 1$:
$$
F(n,k) \leq \max_{j=1}^n F(j,\lfloor k/2 \rfloor) + F(n-j, k-\lfloor k/2 \rfloor) + C n k .
$$
We claim that $F(n,k)$ satisfies $F(n,k) \leq 3Ckn$. We prove the claim by induction in $k$. The base case $k=1$ follows trivially by inspection of the formula for $F(n,1)$. For the inductive step $k>1$, we use the induction hypothesis to conclude that $F(n,k)$ is bounded by
%
%
\begin{align*}
&\max_{j=1}^n 3Cj\lfloor k/2 \rfloor  + 3C(n-j)(k-\lfloor k/2\rfloor)  + C n k  \\
&\leq \max_{j=1}^n 3Cj\lceil k/2 \rceil  + 3C(n-j)\lceil k/2 \rceil  + C n k  \\
& = 3Cn \lceil k/2 \rceil  + Ckn .
\end{align*}
For $k>1$, we have that $\lceil k/2 \rceil \leq (2/3)k$, therefore:
\begin{align*}
F(n,k) &\leq 3Cn (2/3)k + Ckn = 3Ckn.
\end{align*}
%

\end{document}